%% file: main.tex
\documentclass[sigconf]{acmart}
\usepackage{enumitem}
\usepackage{booktabs}
\usepackage[table]{xcolor}
\usepackage{multirow}
\newcommand{\baselineSep}{%
  \arrayrulecolor{black!30}%
  \specialrule{0.3pt}{2pt}{2pt}%
  \arrayrulecolor{black}%
}

\AtBeginDocument{%
  }

\setcopyright{acmlicensed}
\copyrightyear{2018}
\acmYear{2018}
\acmDOI{XXXXXXX.XXXXXXX}
\acmConference[Conference acronym 'XX]{Make sure to enter the correct
  conference title from your rights confirmation email}{June 03--05,
  2018}{Woodstock, NY}
\acmISBN{978-1-4503-XXXX-X/2018/06}

\begin{document}

%%
%% The "title" command has an optional parameter,
%% allowing the author to define a "short title" to be used in page headers.
\title{From Item Scores to Semantic Tokens: Test-Time Collaborative Adaptation for Generative Recommendation}

%%
%% The "author" command and its associated commands are used to define
%% the authors and their affiliations.
%% Of note is the shared affiliation of the first two authors, and the
%% "authornote" and "authornotemark" commands
%% used to denote shared contribution to the research.
\author{Xiongfeng Shan}
\authornote{Both authors contributed equally to this research.}
\affiliation{%
 \institution{Shenzhen Technology University}
 \city{Shenzhen}
 \country{China}}

\author{Xing Tang}
\authornotemark[1]
\authornote{Corresponding authors.}
\affiliation{%
  \institution{Shenzhen Technology University}
  \city{Shenzhen}
  \country{China}}

\author{Hanchong Chen}
\affiliation{%
  \institution{Shenzhen Technology University}
  \city{Shenzhen}
  \country{China}}
%\email{cpalmer@prl.com}

\author{Fuyuan Lyu}
\affiliation{%
  \institution{McGill University}
  \city{Montreal}
  \country{Canada}}
%\email{jsmith@affiliation.org}

\author{Lingjie Li}
\affiliation{%
  \institution{Shenzhen Technology University}
  \city{Shenzhen}
  \country{China}}
%\email{jpkumquat@consortium.net}

\author{Jingyan Jiang}
\affiliation{%
  \institution{Shenzhen Technology University}
  \city{Shenzhen}
  \country{China}}

\author{Dugang Liu}
\affiliation{%
  \institution{Shenzhen University}
  \city{Shenzhen}
  \country{China}}

\author{Xiuqiang He}
\authornotemark[2]
\affiliation{%
  \institution{Shenzhen Technology University}
  \city{Shenzhen}
  \country{China}}

%%
%% By default, the full list of authors will be used in the page
%% headers. Often, this list is too long, and will overlap
%% other information printed in the page headers. This command allows
%% the author to define a more concise list
%% of authors' names for this purpose.
\renewcommand{\shortauthors}{Trovato et al.}
\input{section/abstract}
%%
%% The code below is generated by the tool at http://dl.acm.org/ccs.cfm.
%% Please copy and paste the code instead of the example below.
%%
\begin{CCSXML}
<ccs2012>
<concept>
<concept_id>10002951.10003317.10003347.10003350</concept_id>
<concept_desc>Information systems~Recommender systems</concept_desc>
<concept_significance>500</concept_significance>
</concept>
</ccs2012>
\end{CCSXML}

\ccsdesc[500]{Information systems~Recommender systems}

%%
%% Keywords. The author(s) should pick words that accurately describe
%% the work being presented. Separate the keywords with commas.
\keywords{generative recommendation,
test-time adaptation}
%% A "teaser" image appears between the author and affiliation
%% information and the body of the document, and typically spans the
%% page.

%%
%% This command processes the author and affiliation and title
%% information and builds the first part of the formatted document.
\maketitle
\input{section/introduction}
\input{section/rw}
\input{section/method}
\input{section/experiment}
\input{section/conclusion}

\clearpage

\bibliographystyle{ACM-Reference-Format}
\bibliography{ref}

%\clearpage
\appendix
\input{section/appendix}
\end{document}

%% file: section/abstract.tex
\begin{abstract}
Generative recommenders using semantic identifiers predict items through successive token decisions, but beam search may irreversibly prune promising item paths.
This limitation is particularly consequential when a few recent interactions signal an emerging interest while earlier behavior favors established preferences.
Yet an independently trained discriminative recommender may rank the same items highly from the same interaction history.
Such complementary scores are available at test time, offering a way to adapt a frozen generator without retraining either model.
The challenge is to use these scores before pruning: combining completed recommendation lists cannot alter earlier beam decisions, while scores over complete items cannot directly inform choices among semantic prefixes shared by multiple items.
We propose \textbf{T}est-Time \textbf{C}ollaborative \textbf{B}eam \textbf{S}earch \textbf{A}daptation (\textbf{TCBSA}), a training-free method that incorporates collaborative information into semantic decoding.
TCBSA transforms discriminative score margins into potentials over shared semantic edges and uses a confidence gate to regulate their contributions to cumulative path scores before beam selection.
We establish a sufficient condition for target prefix survival under global beam competition and bound score perturbations under low discriminative confidence.
Full-catalog experiments show consistent improvements over strong baselines across datasets and different generative and discriminative recommenders.
Analyses of emerging interests and decoding behavior demonstrate gains in recommendation accuracy and target path survival, while targeted interventions support the contribution of shared prefix corrections.
TCBSA also offers a favorable balance between recommendation accuracy and inference latency compared with simply widening the native beam.
%Code is available at \url{https://anonymous.4open.science/r/TCBSA-35DD/}.
\end{abstract}

%% file: section/introduction.tex
\section{Introduction}

\begin{figure}[t]
    \centering
    \includegraphics[width=0.95\linewidth]{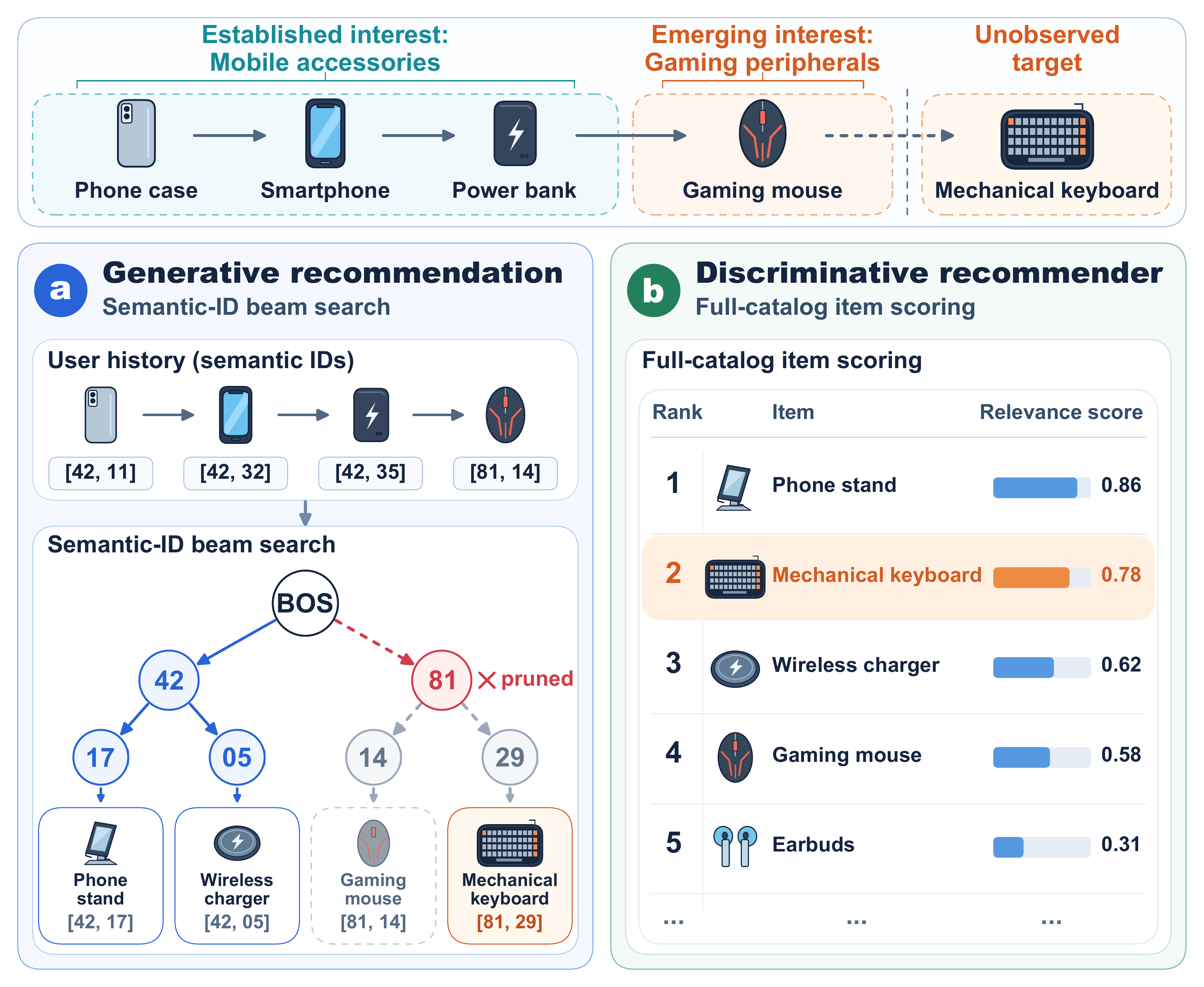}
    \caption{
    A test-time disconnect between item-level ranking and semantic decoding.
    (a) The frozen generator prunes prefix 81, shared by the recently interacted gaming mouse and the unobserved mechanical-keyboard target, rendering the latter unreachable.
    (b) The frozen discriminative recommender nevertheless ranks the target second over the full catalog.
    A simplified two-level semantic tree is shown.
    }
    \label{fig:motivation}
    \Description{
    A user history transitions from mobile accessories to a gaming mouse, with an unobserved mechanical keyboard as the target.
    Panel A shows semantic-ID beam search pruning token 81, excluding the target path.
    Panel B shows full-catalog item scores ranking the mechanical keyboard second.
    }
\end{figure}

Generative recommendation has emerged as a promising paradigm for next-item prediction through autoregressive generation of item identifiers~\cite{tiger,onerec,onerec1}.
In semantic-ID-based approaches, shared semantic tokens capture commonalities across items and organize next-item prediction within a structured generative space~\cite{tiger,LC-Rec}.
At test time, however, beam search retains only the highest-scoring partial identifiers at each depth, making early pruning irreversible: once a prefix is discarded, every item beneath it becomes unreachable~\cite{apao,bear}.
This limitation is particularly pronounced when an emerging interest is reflected by only a few recent interactions, while a longer and more coherent portion of the history favors an established preference.
As illustrated in Figure~\ref{fig:motivation}(a), the frozen generative recommender may preserve prefixes aligned with the established interest while prematurely pruning the branch containing the emerging-interest target.
A potentially relevant item can thus be lost before its complete semantic identifier is ever generated.

The same user history can nonetheless yield a different ranking under discriminative recommendation, which scores complete catalog items directly~\cite{gru4rec,sasrec,bert4rec,caser}.
Unlike generative beam search, discriminative scoring does not rely on successive decisions over semantic prefixes.
As illustrated in Figure~\ref{fig:motivation}(b), the frozen discriminative recommender ranks the emerging-interest target second over the full catalog, even though the generative decoder has pruned its semantic path.
Recent comparative findings further highlight their complementary strengths: discriminative models better capture observed item transitions, whereas generative models perform better on cases requiring generalization~\cite{memgr}.
Crucially, an independently trained discriminative recommender can provide history-specific item scores at test time without retraining the generator.
This creates an opportunity for \textit{test-time collaboration}: using complementary item-level information to adapt semantic-prefix selection while decoding is still underway.

Existing approaches incorporate complementary signals and improve generative recommendation at different stages.
Training-time methods incorporate collaborative signals or behavioral priors~\cite{human,TCA4rec}, while prefix-aware objectives improve beam-search behavior~\cite{apao,bear}; both require modifying generator training.
Post-hoc fusion combines completed discriminative and generative recommendation lists~\cite{memgr}, but cannot alter prefixes already pruned by the generator.
Other inference-time methods employ retrieval-based verification~\cite{specgr} or reward- and value-guided search~\cite{promise,vstar}.
In sequential recommendation, test-time adaptation has also been explored through parameter updates~\cite{dt3or,t2arec} and retrieval-based item-level refinement~\cite{read}.
Bringing collaborative scores from an independently trained discriminative recommender into the intermediate decisions of a frozen semantic-ID generator, however, presents a distinct technical difficulty.
The two operate in different prediction spaces: the former scores complete items, whereas the latter selects prefix-conditioned token extensions that may be shared by multiple item paths.
An item-level score therefore cannot be applied directly to a token-level decision.
Under irreversible beam pruning, this mismatch raises a fundamental question: how can item-level collaborative scores be brought into the decoding decisions of a frozen generative recommender at test time, without updating either model?

To address this question, we propose \textbf{T}est-Time \textbf{C}ollaborative \textbf{B}eam \textbf{S}earch \textbf{A}daptation (\textbf{TCBSA}), a training-free method that adapts frozen generative decoding using item-level collaborative scores.
Rather than combining completed recommendation lists, TCBSA transforms collaborative item scores into corrections to the semantic paths under exploration, allowing them to influence beam selection before pruning.
For each user history, a frozen discriminative recommender scores the full catalog, and TCBSA computes score margins relative to the discriminative Top-$R$ cutoff, making them invariant to additive shifts in item scores.
Using a precomputed item-to-path mapping, these margins are aggregated over shared prefix--token edges to construct history-specific collaborative potentials.
Because multiple item paths can share semantic edges, the resulting potentials can also support items outside the discriminative Top-$R$ set.
To regulate their influence, a confidence gate adjusts the intervention strength according to the maximum softmax-normalized discriminative score for the current history.
During a single decoding pass, TCBSA incorporates these confidence-tempered potentials into cumulative path scores before each beam-pruning step, adapting the search while keeping both models frozen.

Our main contributions are summarized as follows:
\begin{itemize}[leftmargin=*]

\item We propose TCBSA, a test-time adaptation method that incorporates collaborative scores from an independently trained discriminative recommender into the beam decisions of a frozen generative recommender. It projects item-score margins onto shared semantic edges to intervene before pruning, without model updates.

\item We characterize TCBSA as a history-specific transformation of cumulative generative path scores and derive a sufficient condition for target-prefix survival under global beam competition. We further analyze the score perturbations induced by confidence-tempered adaptation and characterize its inference complexity.

\item Extensive full-catalog experiments demonstrate improvements across four datasets and three generative--discriminative model pairings. Analyses of emerging interests, decoding behavior, shared-prefix attribution, and beam widths further establish the effectiveness and accuracy--latency trade-offs of test-time collaborative adaptation.

\end{itemize}

%% file: section/rw.tex
\section{Related Work}

\paragraph{Generative Recommendation.}
Semantic-ID-based generative recommendation represents items as sequences of discrete tokens and predicts the next item through autoregressive decoding~\cite{rq-vae,tiger}.
Subsequent work improves identifier construction and sequence modeling through collaborative semantics and textual identifiers~\cite{LC-Rec,IDGenRec}.
Recent frameworks such as MiniOneRec~\cite{minionerec} further advance generative recommendation through scalable model training and preference alignment.
At training time, collaborative and behavioral information is incorporated through learned identifiers, dual-stream modeling, and behavioral prior heads~\cite{letter,eager,human}.
TCA4Rec projects collaborative logits into token distributions for soft supervision during generator training~\cite{TCA4rec}.
APAO and BEAR~\cite{apao,bear} address the training--inference gap through prefix-aware optimization and beam-aware regularization, respectively.
Other architectures combine semantic generation with dense representations or item-level scoring, as exemplified by COBRA~\cite{cobra} and Gryphon~\cite{gryphon}.
At inference, SpecGR~\cite{specgr} verifies retrieved candidates with the generator, while LIGER~\cite{liger} combines generative retrieval with dense candidate reranking.
Recent methods also employ trained process reward models or value-aligned components to influence prefix search~\cite{promise,vstar,univa}.
MemGen~\cite{memgr} instead adaptively combines completed discriminative and generative recommendation lists.
TCBSA operates at a different interface: it projects retrieval-relative item-score margins onto shared prefix--token edges and adapts the beam scores of a frozen generator before pruning.

\paragraph{Guided Decoding and Test-Time Adaptation.}
Beyond recommendation, decoding-time control reweights a frozen generator using auxiliary attribute scores~\cite{fudge}, constrains beam search to valid outputs~\cite{constrained_bs,autoregressive}, or uses a stronger model to select among a weaker generator's top-$K$ token candidates~\cite{select}.
More broadly, test-time adaptation adjusts model predictions using information available during inference.
Classical approaches often update model parameters or normalization statistics from unlabeled test inputs~\cite{ttt,tent}, while retrieval-augmented methods incorporate external information into instance-level predictions~\cite{ratta,ttraa}.
In recommendation, DT3OR and T2ARec perform test-time parameter adaptation for distribution or interest shifts~\cite{dt3or,t2arec}, whereas ReAd~\cite{read} retrieves collaborative items and refines item-level predictions using an augmented representation.
TCBSA performs history-specific test-time adaptation within semantic decoding, incorporating collaborative scores from an independently trained discriminative recommender to modify beam selection while keeping both models frozen.

%% file: section/method.tex
\section{Problem Formulation}
\label{sec:problem}

Let $\mathcal{I}$ denote the item catalog and $\mathbf{x}_u=(i_{u,1},\ldots,i_{u,T})$ denote the interaction history of user $u$, where $T$ is the history length and $i_{u,t}\in\mathcal{I}$ is the item interacted with at position $t$.
Given $\mathbf{x}_u$, next-item recommendation aims to predict the next item $i_u^{+}\in\mathcal{I}$ that user $u$ will interact with.
We consider two recommendation paradigms that operate at different prediction granularities during inference.

\paragraph{Generative recommendation.}
A generative recommender $G$ represents each item $i$ by a length-$L$ semantic identifier $c(i)$.
We write $c(i)=(c_1(i),\ldots,c_L(i))$, where $c_\ell(i)\in\mathcal{V}_\ell$ is the token at depth $\ell$ and $\mathcal{V}_\ell$ is the corresponding token vocabulary.
The identifier is modeled autoregressively as:
\begin{equation}
P_G(c(i)\mid\mathbf{x}_u)
=
\prod_{\ell=1}^{L}
P_G\!\left(c_\ell(i)\mid c_{<\ell}(i),\mathbf{x}_u\right),
\label{eq:generative_factorization}
\end{equation}
where $P_G$ denotes the conditional distribution induced by $G$, and $c_{<\ell}(i)=(c_1(i),\ldots,c_{\ell-1}(i))$ is the prefix before depth $\ell$.
At test time, $G$ expands semantic prefixes token by token and retains the globally highest-scoring $B$ partial identifiers at each depth, where $B$ is the beam width~\cite{NCI}.
Once a target prefix is pruned, its descendant items become unreachable~\cite{apao,bear}.

\paragraph{Discriminative recommendation.}
A discriminative recommender $D$ instead assigns an item-level score $s_D(i\mid\mathbf{x}_u)$ to every $i\in\mathcal{I}$.
At test time, it retrieves:
\begin{equation}
\mathcal{R}_u^R
=
\underset{i\in\mathcal{I}}
{\operatorname{Top}\text{-}R}
\;s_D(i\mid\mathbf{x}_u),
\label{eq:discriminative_retrieval}
\end{equation}
where $R$ is the retrieval width and $\mathcal{R}_u^R$ contains the $R$ highest-scoring items for user $u$.

\paragraph{Test-time collaborative adaptation.}
The two paradigms operate in different prediction spaces: $D$ scores complete items, whereas $G$ makes prefix-conditioned token decisions.
Given independently trained and frozen $G$ and $D$, we seek a test-time adaptation mechanism $\mathcal{F}_{\mathrm{test}}$ that incorporates collaborative scores from $D$ into the intermediate decoding decisions of $G$ without updating either model.
Let $\mathbf{s}_D(\mathbf{x}_u)=\{s_D(i\mid\mathbf{x}_u)\}_{i\in\mathcal{I}}$ denote the item-level scores produced by $D$.
Our objective is to obtain an adapted recommendation list:
\begin{equation}
\widehat{\mathcal{L}}_u^K
=
\mathcal{F}_{\mathrm{test}}
\left(
G,\mathbf{x}_u,\mathbf{s}_D(\mathbf{x}_u)
\right),
\label{eq:test_time_adaptation}
\end{equation}
where $\widehat{\mathcal{L}}_u^K$ contains the top-$K$ recommended items.
The central challenge is to make item-level collaborative scores actionable during semantic decoding, before potentially useful paths are irreversibly pruned and without access to the ground-truth next item.

\section{Methodology}
\label{sec:method}

To bridge item-level collaborative ranking and token-level generative decoding, we propose \textbf{T}est-Time \textbf{C}ollaborative \textbf{B}eam \textbf{S}earch \textbf{A}daptation (\textbf{TCBSA}).
TCBSA implements $\mathcal{F}_{\mathrm{test}}$ in Equation~\eqref{eq:test_time_adaptation} by adjusting the frozen generator's beam scores using discriminative item scores.
As illustrated in Figure~\ref{fig:method_overview}, TCBSA comprises an offline preparation step and two test-time stages.
Offline, fixed semantic identifiers are used to construct an item-to-path lookup that records each item's prefix--token edges.
At test time, the frozen discriminative recommender $D$ first scores the full item catalog for the current history $\mathbf{x}_u$.
TCBSA computes score margins relative to the Top-$R$ cutoff and aggregates them over shared semantic edges to construct collaborative potentials, whose influence is controlled by a confidence gate.
The frozen generator $G$ then performs a single decoding pass, during which these potentials adjust cumulative path scores before each beam-pruning step.

\begin{figure*}[t]
    \centering
    \includegraphics[width=0.98\textwidth]{figures/method.pdf}
    \caption{
    Overview of TCBSA.
    (a) Offline construction of an item-to-path lookup using fixed semantic identifiers.
    (b) At test time, scores from a frozen discriminative recommender are transformed into shared semantic-edge potentials and scaled by a confidence gate to adapt the frozen generator's beam scores before pruning.
    Shared edges can also support items beyond the discriminative Top-$R$ set.
    }
    \label{fig:method_overview}
\end{figure*}

\subsection{Cross-Space Collaborative Score Transformation}
\label{sec:collaborative_evidence}

Item-level scores from $D$ cannot directly determine how a semantic prefix should be extended.
TCBSA bridges this mismatch by constructing score margins relative to the discriminative Top-$R$ cutoff and projecting them onto prefix--token edges along the corresponding semantic paths.

\paragraph{Top-$R$ score margins.}

The discriminative Top-$R$ set identifies high-scoring items, but their absolute scores do not directly specify the strength of their contributions to semantic decoding.
Because discriminative scores have no intrinsic zero point, using them directly would make the resulting corrections sensitive to arbitrary additive shifts.
Let $s_D^{(R)}(\mathbf{x}_u)$ denote the $R$-th largest discriminative score for history $\mathbf{x}_u$.
We define the item-score margin relative to this cutoff as:
\begin{equation}
e_i(\mathbf{x}_u)
=
\left[
s_D(i\mid\mathbf{x}_u)
-
s_D^{(R)}(\mathbf{x}_u)
\right]_+,
\label{eq:item_evidence}
\end{equation}
where $[z]_+=\max(z,0)$ denotes the positive-part operator.
The margin measures how far an item's score lies above the Top-$R$ cutoff.
Items at or below the cutoff receive zero margins, and the resulting margins are invariant to additive shifts of the discriminative scores.

\paragraph{Projecting item scores onto semantic edges.}

The score margins are defined over complete items, whereas the generator makes successive decisions over prefix-conditioned token extensions.
We therefore propagate each Top-$R$ item's margin along its semantic path.
For an item $i$, we define its set of prefix--token edges as:
\begin{equation}
\mathcal{E}(i)
=
\left\{
\big(c_{<\ell}(i),c_\ell(i)\big)
\right\}_{\ell=1}^{L},
\label{eq:item_path_edges}
\end{equation}
where $\mathcal{E}(i)$ contains the $L$ edges traversed by the semantic identifier $c(i)$.
Because semantic identifiers are fixed, $\mathcal{E}(i)$ can be precomputed for every item and reused across user histories.

Let $p$ denote a current semantic prefix and $t$ a candidate token extending that prefix.
We aggregate the margins of Top-$R$ items whose semantic paths contain the edge $(p,t)$:
\begin{equation}
A_{\mathbf{x}_u}(p,t)
=
\operatorname{Agg}
\left(
\left\{
e_i(\mathbf{x}_u)
\,\middle|\,
i\in\mathcal{R}_u^R,\,
(p,t)\in\mathcal{E}(i)
\right\}
\right),
\label{eq:prefix_edge_potential}
\end{equation}
where $A_{\mathbf{x}_u}(p,t)$ is the collaborative potential assigned to edge $(p,t)$, and $\operatorname{Agg}$ combines the item-score margins associated with that edge.
For top-$M$ mean aggregation, let $v_{(1)}\geq\cdots\geq v_{(n)}$ denote the $n$ margin values and $\bar{M}=\min(M,n)$, where $M$ is the aggregation budget.
The potential is computed as $\bar{M}^{-1}\sum_{j=1}^{\bar{M}}v_{(j)}$; max aggregation instead returns $v_{(1)}$.
We set $A_{\mathbf{x}_u}(p,t)=0$ when no retrieved item supports the edge.

By aggregating item-score margins over shared semantic edges, TCBSA constructs prefix-conditioned collaborative potentials that directly influence token-level decoding decisions.
An item outside $\mathcal{R}_u^R$ can also benefit when its semantic path shares edges with Top-$R$ items.
Such corrections apply only to the shared edges; the remaining path must still be completed by the generative model.

\subsection{Confidence-Tempered Beam Adaptation}
\label{sec:beam_steering}

The collaborative potentials capture support for semantic edges but do not account for the confidence of the discriminative predictions.
Applying these corrections uniformly across histories may distort native beam scores when the discriminative predictions are uncertain.
TCBSA therefore adjusts the intervention strength according to the confidence of $D$, allowing history-specific collaborative adaptation during decoding.

\paragraph{Confidence-tempered intervention.}
We use the maximum softmax probability of the discriminative scores as a lightweight confidence proxy.
Specifically, we define the normalized scores and confidence as:
\begin{equation}
\begin{aligned}
q_D(i\mid\mathbf{x}_u)
&=
\frac{\exp s_D(i\mid\mathbf{x}_u)}
{\sum_{j\in\mathcal{I}}\exp s_D(j\mid\mathbf{x}_u)}, \\
m(\mathbf{x}_u)
&=
\max_{i\in\mathcal{I}}q_D(i\mid\mathbf{x}_u),
\end{aligned}
\label{eq:discriminative_confidence}
\end{equation}
where $q_D(i\mid\mathbf{x}_u)$ is the softmax-normalized discriminative score and $m(\mathbf{x}_u)$ measures its maximum concentration over the full catalog.
We use $m(\mathbf{x}_u)$ as a relative confidence proxy rather than a calibrated probability of correctness.

The effective intervention strength is defined as:
\begin{equation}
\eta(\mathbf{x}_u)
=
\lambda
\sigma\!\left(
k\big(m(\mathbf{x}_u)-\tau\big)
\right),
\label{eq:confidence_gate}
\end{equation}
where $\sigma(a)=1/(1+\exp(-a))$ is the sigmoid function, $\lambda>0$ is the maximum intervention scale, $k>0$ controls the sharpness of the gate, and $\tau$ is the confidence threshold.
The strength $\eta(\mathbf{x}_u)$ increases with discriminative confidence and is attenuated when confidence falls below $\tau$.

\paragraph{Pre-pruning beam adaptation.}
Let $\pi_0(t\mid p,\mathbf{x}_u)=P_G(t\mid p,\mathbf{x}_u)$ denote the native next-token distribution of $G$ for candidate token $t$ under prefix $p$.
TCBSA incorporates the confidence-tempered collaborative potential into the cumulative path score before beam selection:
\begin{equation}
\widetilde{S}_u(p\oplus t)
=
\widetilde{S}_u(p)
+
\log \pi_0(t\mid p,\mathbf{x}_u)
+
\eta(\mathbf{x}_u)A_{\mathbf{x}_u}(p,t),
\label{eq:guided_path_score}
\end{equation}
where $\widetilde{S}_u(p)$ is the accumulated adapted score of prefix $p$, $p\oplus t$ denotes its extension by token $t$, and $\widetilde{S}_u(\epsilon)=0$ for the empty prefix $\epsilon$.
At each decoding depth, TCBSA retains the globally highest-scoring $B$ extensions according to Equation~\eqref{eq:guided_path_score}, repeating this process until $L$ tokens have been generated.

For a partial semantic path $h$, let $S_0(h\mid\mathbf{x}_u)$ denote its native cumulative log-score under $G$.
Unrolling the recursive update yields:
\begin{equation}
\begin{aligned}
\widetilde{S}_u(h)
&=
S_0(h\mid\mathbf{x}_u)+\Delta_u(h),\\
\Delta_u(h)
&=
\eta(\mathbf{x}_u)
\sum_{j=1}^{|h|}
A_{\mathbf{x}_u}(h_{<j},h_j).
\end{aligned}
\label{eq:collaborative_score_correction}
\end{equation}
This decomposition characterizes TCBSA as a history-specific test-time adaptation of cumulative path scores, while the generator's native token probabilities and model parameters remain unchanged.

An unsupported edge has $A_{\mathbf{x}_u}(p,t)=0$ and receives no additional local correction, although its preceding prefix may have accumulated collaborative support.
By applying these corrections before global beam pruning, TCBSA can preserve promising semantic paths that would otherwise become unreachable.
After $L$ steps, valid identifiers are mapped to items and ranked by their adapted scores, yielding $\widehat{\mathcal{L}}_u^K$.

Appendix~\ref{app:theory_proofs} establishes a sufficient condition for target-prefix survival under global beam competition, showing when the collaborative advantage, scaled by the confidence gate, overcomes the native score deficit.
It also provides a stability bound on low-confidence score perturbations and analyzes inference complexity.

%% file: section/experiment.tex
\section{Experiments}
\label{sec:experiments}

We conduct experiments on four public datasets to address the following research questions:

\begin{itemize}[leftmargin=*]
\item \textbf{RQ1:}
How does TCBSA compare with existing methods, and does it generalize across generative backbones and discriminative recommenders?

\item \textbf{RQ2:}
How effectively does TCBSA adapt to emerging interests supported by only one or two recent interactions?

\item \textbf{RQ3:}
How does TCBSA reshape semantic beam search, and can it generate correct recommendations beyond the discriminative Top-$R$ set?

\item \textbf{RQ4:}
How do confidence tempering, semantic-edge aggregation, and shared-prefix corrections contribute to recommendation outcomes?

\item \textbf{RQ5:}
How does TCBSA's accuracy--latency trade-off change with beam width?
\end{itemize}

\input{tables/main_table.tex}

\subsection{Experimental Setup}
\label{sec:experimental_setup}

\paragraph{Datasets.}
We use three 5-core domains from Amazon Reviews 2023~\cite{AmazonReviews2023}: Industrial and Scientific (\textit{Science}), Musical Instruments (\textit{Music}), and Office Products (\textit{Office}), alongside Yelp from Kaggle~\cite{YelpKaggleV2}, preprocessed following~\cite{S3Rec}.
The user/item counts are 50,985/25,848 (Science), 58,062/24,701 (Music), 225,726/77,929 (Office), and 30,431/20,033 (Yelp).
Following~\cite{tiger,sasrec,bert4rec}, we chronologically order interactions and adopt leave-one-out evaluation, reserving the last interaction for testing, the penultimate for validation, and the remainder for training.
Each input retains at most the 20 most recent preceding interactions~\cite{tiger,LC-Rec}.

\paragraph{Evaluation protocol.}
We evaluate against the full item catalog without candidate sampling~\cite{etegr,sampledmetrics}.
Previously interacted items remain eligible for all methods, and padding is masked.
We report Recall@$K$ and NDCG@$K$ for $K\in\{5,10\}$, using validation NDCG@10 for model selection and hyperparameter tuning.
Statistical significance is assessed via user-level paired bootstrap (10,000 resamples, $p<0.05$), comparing each TCBSA variant with its unadapted generative backbone and the strongest baseline for each dataset--metric pair.

\paragraph{Baselines.}
We compare TCBSA with four categories of methods.
For discriminative recommendation, we include GRU4Rec~\cite{gru4rec}, SASRec~\cite{sasrec}, and BERT4Rec~\cite{bert4rec}.
The generative baselines include the semantic-ID models TIGER~\cite{tiger} and MiniOneRec~\cite{minionerec}, alongside HSTU~\cite{HSTU}, a strong sequential generative architecture operating on raw item IDs.
Training-time enhancements include Human-Prior~\cite{human}, which incorporates behavioral priors through auxiliary prediction heads, and APAO-Pairwise~\cite{apao}, which employs prefix-aware optimization.
Finally, MemGen-Fixed and MemGen-Adaptive~\cite{memgr} serve as post-hoc fusion baselines.
They combine completed TIGER and SASRec recommendation lists using validation-selected fixed and confidence-dependent mixing weights, respectively.

\paragraph{Implementation details.}
For TIGER and SASRec, we follow the MemGen-GR implementation~\cite{memgr}.
GRU4Rec and BERT4Rec are implemented with RecBole~\cite{recbole}, while HSTU and APAO-Pairwise follow their official implementations, with parameter budgets matched to TIGER.
MiniOneRec undergoes supervised fine-tuning followed by reinforcement learning, with its final RL checkpoint used for evaluation.
We encode item textual metadata with Sentence-T5~\cite{sentence} and apply a three-level RQ-K-means quantizer~\cite{onerec}.
A fourth collision-resolution token is appended to obtain unique length-$L=4$ semantic identifiers.
For a controlled cross-backbone comparison, all semantic-ID-based models share the same item-to-semantic-ID mapping.
We evaluate TCBSA with three pairs of frozen generative and discriminative recommenders: TIGER--SASRec, TIGER--GRU4Rec, and MiniOneRec--SASRec.
The default beam width is $B=20$, and a shared retrieval width $R=500$ is chosen by macro-averaged validation NDCG@10.
The remaining adaptation hyperparameters are tuned on TIGER--SASRec for each dataset and reused unchanged for the other two pairings.
Further implementation details are provided in Appendix~\ref{app:implementation_details}.

\subsection{Overall Effectiveness and Generality (RQ1)}
\label{sec:overall_results}

Across the four datasets, TCBSA variants achieve the best results on all 16 dataset--metric pairs in Table~\ref{tab:overall_performance}, with significant gains over the strongest baselines in 14 comparisons.
The advantage is particularly pronounced on Yelp, where MiniOneRec + TCBSA(SASRec) improves over the strongest baselines by 8.4\%--12.4\%.
In the default TIGER--SASRec pairing, TCBSA also outperforms APAO-Pairwise, Human-Prior, and both post-hoc fusion baselines across all datasets and metrics.
Unlike completed-list fusion, which cannot alter the generator's earlier pruning decisions, TCBSA incorporates collaborative information while semantic paths remain accessible, enabling direct intervention in beam selection.

TCBSA's effectiveness extends across different generative backbones and discriminative recommenders.
Using SASRec as the discriminative model, TCBSA improves both TIGER and MiniOneRec on all 16 dataset--metric pairs, yielding NDCG@10 gains of 4.9\%--30.4\% and 2.3\%--9.6\%, respectively.
Replacing SASRec with GRU4Rec also improves TIGER across all metrics, with NDCG@10 gains of 3.2\%--14.9\%.
Importantly, all three configurations retain their frozen checkpoints and reuse dataset-specific adaptation hyperparameters selected solely on TIGER--SASRec, without pairing-specific tuning.

\input{tables/emerging_interest.tex}

\begin{figure*}[t]
\centering

\begin{minipage}[t]{0.3\textwidth}
\vspace{0pt}
\centering
\includegraphics[width=\linewidth]{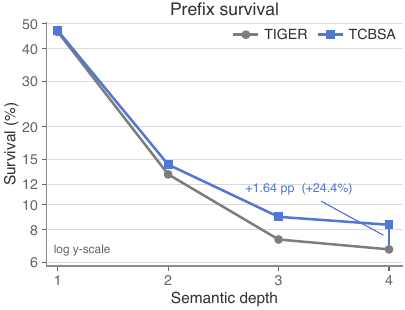}
\par\smallskip
\textbf{(a)}
\end{minipage}
\hfill
\begin{minipage}[t]{0.3\textwidth}
\vspace{0pt}
\centering
\includegraphics[width=\linewidth]{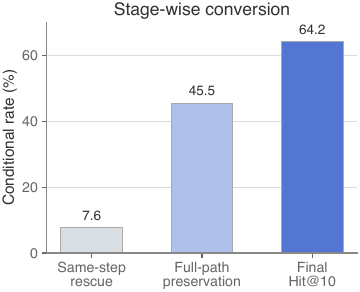}
\par\smallskip
\textbf{(b)}
\end{minipage}
\hfill
\begin{minipage}[t]{0.314\textwidth}
\vspace{0pt}
\centering
\includegraphics[width=\linewidth]{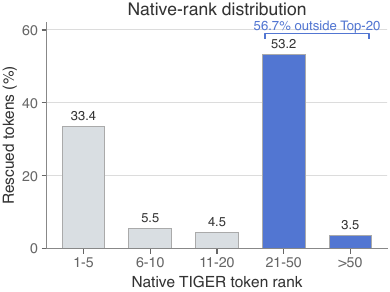}
\par\smallskip
\textbf{(c)}
\end{minipage}

\caption{
Beam-search behavior on Yelp with $B=20$.
(a) Target-prefix survival across semantic depths.
(b) Conditional progression from same-step rescue to full-path preservation and final Hit@10.
(c) Native TIGER token-rank distribution at the first target-pruning depth, conditioned on the ground-truth preceding prefix.
}
\label{fig:search_mechanism}
\Description{
Three panels showing target-prefix survival, stage-wise rescue propagation, and the native token-rank distribution of rescued targets.
}
\end{figure*}

\begin{figure}[t]
    \centering
    \makebox[\columnwidth][c]{%
        \hspace*{-0.025\columnwidth}%
        \includegraphics[
            width=0.70\columnwidth
        ]{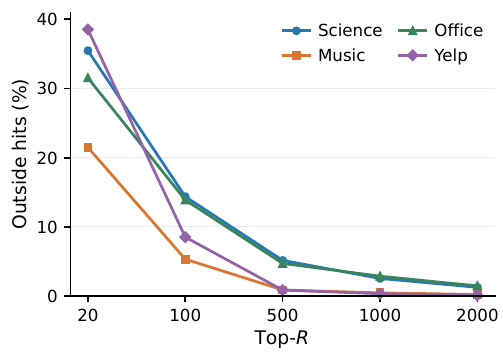}%
        \hspace*{0.025\columnwidth}%
    }
    \caption{
    Correct recommendations beyond the discriminative Top-$R$ set.
    Each point shows the proportion of correct TCBSA Top-10 hits outside the corresponding SASRec Top-$R$ set.
    Top-$R$ values are displayed at equally spaced positions.
    }
    \label{fig:outside_hits}
    \Description{
    Outside-retrieval correct-hit percentages across five
    retrieval widths on four datasets.
    }
\end{figure}

\subsection{Adaptation to Emerging Interests (RQ2)}
\label{sec:emerging_interest}

We evaluate test-time adaptation to emerging interests on Science and Yelp, focusing on histories where only a few recent interactions align with the target, following a coherent earlier preference.
Sentence-T5 item embeddings are partitioned into 16 interest clusters using MiniBatch K-means~\cite{kmeans}, with the held-out target determining the target-interest cluster.
An $s$-touch instance ($s\in\{1,2\}$) ends with exactly $s$ consecutive interactions from this cluster, with at least three of the four immediately preceding interactions belonging to another common cluster.
Histories shorter than $s+4$ are excluded.
Ground-truth targets are used only for constructing method-independent evaluation subsets, and the one- and two-touch cohorts comprise distinct instances.
The resulting cohorts contain 602 and 105 instances on Science, and 350 and 54 on Yelp, for one- and two-touch conditions, respectively.

Table~\ref{tab:emerging_interest} shows that TCBSA consistently improves the frozen TIGER backbone in both Recall@10 and NDCG@10 across all four emerging-interest subsets.
Relative NDCG@10 gains reach 16.29\% and 31.95\% on Science and 76.31\% and 103.29\% on Yelp for one- and two-touch histories, respectively, with the larger Yelp gains partly reflecting TIGER's weaker baseline performance.
Compared with training-time enhancement and post-hoc fusion, TCBSA also outperforms Human-Prior in both metrics and achieves higher NDCG@10 than both fusion baselines across all four subsets, while matching or exceeding their Recall@10.
These results further support the benefit of intervening before beam pruning, particularly when the target interest is represented by sparse recent interactions.

Performance relative to standalone SASRec reveals a more nuanced pattern.
While TCBSA leads both metrics on the Science subsets, SASRec remains stronger on Yelp's one-touch subset; with two touches, TCBSA achieves the highest NDCG@10 and ties for the best Recall@10.
The contrast is consistent with additional recent interactions strengthening the target-interest signal, although the distinct cohorts and limited two-touch samples preclude a within-user interpretation.

\begin{figure*}[t]
    \centering
    \includegraphics[width=\textwidth]{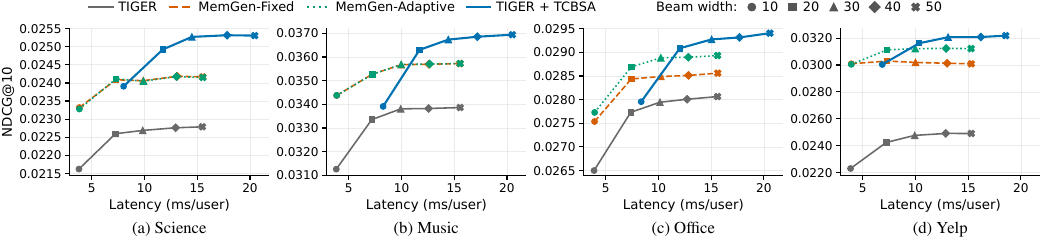}
    \caption{
    Accuracy--latency trade-offs across beam widths on four datasets.
    NDCG@10 is plotted against mean synchronized inference time per user (ms/user).
    Line styles distinguish methods, while marker shapes denote $B\in\{10,20,30,40,50\}$.
    }
    \label{fig:beam_tradeoff}
\end{figure*}

\subsection{Analysis of Decoding Behavior (RQ3)}
\label{sec:search_mechanism}

To understand how test-time collaborative adaptation reshapes semantic decoding, we trace target-prefix survival and recovery under native TIGER and TCBSA on Yelp with beam width $B=20$.

\paragraph{Target-Prefix Preservation and Recovery.}
Figure~\ref{fig:search_mechanism}(a) shows that native TIGER progressively loses target prefixes, whereas TCBSA maintains higher survival rates at every semantic depth. At the final depth, survival increases from 6.72\% to 8.37\% (+24.44\%), closely matching the item-level HR@20 of 8.35\% after valid-item completion. The widening gap across depths suggests that pre-pruning adaptation helps prevent the accumulation of irreversible pruning errors.

To examine whether local rescues persist through subsequent decoding, Figure~\ref{fig:search_mechanism}(b) traces their progression from same-step rescue to complete-path preservation and final Hit@10. Among 28,385 instances where native TIGER first prunes the target prefix, TCBSA rescues 2,163 at the same depth (7.62\%). Of these rescues, 985 retain the complete target path (45.54\%), and 632 of the preserved paths enter the final Top-10 list (64.16\%). These transitions show that test-time score corrections can propagate beyond the initial intervention and recover correct recommendations.

Figure~\ref{fig:search_mechanism}(c) further examines the native token ranks of all same-step rescues, irrespective of subsequent path preservation. Ranks are computed under ground-truth preceding prefixes. The median rank is 22, with 56.68\% of rescued tokens falling outside the native Top-20, indicating that collaborative guidance can promote semantic extensions that the generator initially ranks poorly. Although TCBSA also displaces 12.68\% of target prefixes surviving the first native decoding depth, the total number of Hit@10 instances increases from 1,393 to 1,738, yielding a net gain of 345 correct recommendations.

\paragraph{Recommendations Beyond the Discriminative Top-$R$ Set.}
TCBSA constructs collaborative potentials from retrieved items without restricting generative outputs to the discriminative Top-$R$ set. Figure~\ref{fig:outside_hits} measures the proportion of correct TCBSA Top-10 predictions whose targets lie outside the corresponding SASRec Top-$R$ set, using fixed checkpoints and identical non-$R$ settings. At $R=20$, these outside-set hits account for 21.49\%--38.53\% of all correct recommendations across the four datasets. At the default $R=500$, the proportions are 5.16\% (123/2,383) on Science, 4.72\% (532/11,260) on Office, 0.89\% (34/3,823) on Music, and 0.86\% (15/1,738) on Yelp. Outside-set hits persist even at $R=2000$, confirming that the retrieved set does not impose a hard constraint on generative outputs. Whether shared-prefix corrections directly contribute to these predictions is examined through targeted interventions in Section~\ref{sec:ablation_attribution}.

\subsection{Ablation and Attribution Analysis (RQ4)}
\label{sec:ablation_attribution}

\input{tables/ablation_attribution.tex}

We assess the roles of confidence tempering, semantic-edge aggregation, and shared-prefix corrections through the ablations and targeted interventions in Table~\ref{tab:ablation_attribution}.

\paragraph{Confidence Gate and Aggregation.}
Table~\ref{tab:ablation_attribution}(a) presents Science and Yelp results, while Table~\ref{tab:appendix_gate_aggregation} covers all four datasets.
Removing the gate and applying the maximum intervention strength uniformly reduces NDCG@10 by 1.88\%--9.08\%.
This supports confidence-tempered adaptation, whose score-perturbation behavior is analyzed in Appendix~\ref{app:theory_proofs}.
Across top-$M$ mean, mean, max, and log-sum-exp aggregation, the largest absolute relative deviations from the selected configurations range from 0.83\% to 1.89\%.
No aggregation rule consistently performs best across datasets, suggesting limited sensitivity to the aggregation choice.
Detailed results are provided in Appendix~\ref{app:additional_experiments}.

\paragraph{Shared-Prefix Attribution.}
To examine whether shared-prefix corrections contribute to recommendations beyond the discriminative Top-$R$ set, we replay TCBSA on its original outside-set Top-10 hits at $R=500$, removing positive corrections along each ground-truth target path.
For these outside-set targets, positive corrections arise from semantic edges shared with retrieved items.
As controls, we remove approximately strength-matched off-target corrections at the same semantic depths, averaging hit-loss rates over five random trials.
Target-path removal eliminates 159 of 532 (29.89\%) original outside-set hits on Office and 35 of 123 (28.46\%) on Science, compared with negligible losses of 0.04\% and 0.00\% under matched controls.
Moreover, all newly gained outside-set hits on both datasets (34/34 on Office and 7/7 on Science) disappear after target-path removal.
These interventions provide direct evidence that shared-prefix corrections contribute to correct recommendations beyond the retrieved set, rather than merely co-occurring with them.
Ground-truth targets are used only for offline diagnosis, and a lost Top-10 hit may reflect reranking rather than beam pruning.

Additional retrieval-width and hyperparameter analyses in Appendix~\ref{app:additional_experiments} show diminishing returns beyond $R=500$ and examine sensitivity to adaptation strength.

\subsection{Accuracy--Latency Trade-offs (RQ5)}
\label{sec:accuracy_latency}

Increasing beam width can mitigate premature pruning, but at additional decoding cost. To examine whether test-time collaborative adaptation remains beneficial under larger search budgets, Figure~\ref{fig:beam_tradeoff} compares NDCG@10 and inference latency across $B\in\{10,20,30,40,50\}$ with fixed checkpoints and non-beam hyperparameters. Latency is measured as mean synchronized inference time per user over batches of 32, including model inference and method-specific discriminative scoring, fusion, and collaborative adaptation, while excluding loading, tokenization, metric computation, and file I/O.

TCBSA outperforms native TIGER in all 20 dataset--beam configurations, with NDCG@10 gains of 4.67\%--34.68\%. Even at $B=50$, TCBSA maintains NDCG@10 improvements of 11.06\%, 9.10\%, 4.78\%, and 29.21\% on Science, Music, Office, and Yelp, respectively. TCBSA also surpasses both post-hoc fusion baselines at every $B\geq20$, although it slightly trails the stronger fusion baseline on Music and Yelp at $B=10$. Thus, wider native search does not eliminate the benefits of test-time collaborative adaptation.

At the same beam width, these gains come with additional inference cost: at the default $B=20$, TCBSA incurs $1.43\times$--$1.63\times$ the latency of TIGER. However, TCBSA at $B=20$ achieves 3.64\%--26.94\% higher NDCG@10 than TIGER at $B=50$, while reducing latency by 22.95\%--32.27\%. It also outperforms both $B=50$ fusion baselines in accuracy at lower latency. Moreover, increasing TCBSA's beam width from 30 to 50 yields only 0.14\%--0.57\% additional NDCG@10 at 36.78\%--42.09\% higher latency. Together, these observations show that test-time collaborative adaptation can use limited inference budgets more effectively than simply widening the native beam.

%% file: tables/main_table.tex
\begin{table*}[!htbp]
\centering
\caption{
Full-catalog Recall and NDCG at $K\in\{5,10\}$.
Best and second-best results are bold and underlined.
$^{\dagger}$ and $^{*}$ indicate significant gains over the corresponding backbone and the strongest baseline under paired bootstrap ($p<0.05$), respectively.
}
\label{tab:overall_performance}
\scriptsize
\setlength{\tabcolsep}{1.4pt}
\renewcommand{\arraystretch}{1.08}
\begin{tabular*}{\textwidth}{@{\extracolsep{\fill}}l*{16}{r}@{}}
\toprule
\rowcolor{gray!8}
\textbf{Method}
& \multicolumn{4}{c}{\textbf{Science}}
& \multicolumn{4}{c}{\textbf{Music}}
& \multicolumn{4}{c}{\textbf{Office}}
& \multicolumn{4}{c}{\textbf{Yelp}} \\
\cmidrule(lr){2-5}
\cmidrule(lr){6-9}
\cmidrule(lr){10-13}
\cmidrule(lr){14-17}
& \textbf{R@5} & \textbf{N@5} & \textbf{R@10} & \textbf{N@10}
& \textbf{R@5} & \textbf{N@5} & \textbf{R@10} & \textbf{N@10}
& \textbf{R@5} & \textbf{N@5} & \textbf{R@10} & \textbf{N@10}
& \textbf{R@5} & \textbf{N@5} & \textbf{R@10} & \textbf{N@10} \\
\midrule
GRU4Rec
& 0.0200 & 0.0128 & 0.0335 & 0.0171
& 0.0357 & 0.0232 & 0.0560 & 0.0297
& 0.0225 & 0.0151 & 0.0331 & 0.0185
& 0.0233 & 0.0151 & 0.0397 & 0.0204 \\
SASRec
& 0.0238 & 0.0152 & 0.0380 & 0.0198
& 0.0386 & 0.0255 & 0.0584 & 0.0318
& 0.0264 & 0.0177 & 0.0385 & 0.0215
& 0.0321 & 0.0221 & 0.0508 & 0.0281 \\
BERT4Rec
& 0.0163 & 0.0105 & 0.0269 & 0.0139
& 0.0316 & 0.0202 & 0.0491 & 0.0258
& 0.0192 & 0.0125 & 0.0286 & 0.0155
& 0.0185 & 0.0114 & 0.0325 & 0.0159 \\
\baselineSep
TIGER
& 0.0270 & 0.0176 & 0.0426 & 0.0226
& 0.0404 & 0.0268 & 0.0606 & 0.0334
& 0.0334 & 0.0232 & 0.0474 & 0.0277
& 0.0294 & 0.0190 & 0.0458 & 0.0242 \\
HSTU
& 0.0277 & 0.0156 & 0.0435 & 0.0207
& \underline{0.0438} & 0.0274 & 0.0652 & 0.0343
& 0.0267 & 0.0176 & 0.0393 & 0.0217
& 0.0406 & 0.0297 & 0.0593 & 0.0356 \\
MiniOneRec
& 0.0296 & \underline{0.0200} & 0.0430 & 0.0244
& 0.0393 & 0.0270 & 0.0568 & 0.0326
& 0.0315 & 0.0226 & 0.0424 & 0.0262
& \underline{0.0434} & \underline{0.0309} & \underline{0.0607} & \underline{0.0365} \\
\baselineSep
Human-Prior
& 0.0281 & 0.0185 & 0.0438 & 0.0236
& 0.0360 & 0.0234 & 0.0551 & 0.0295
& 0.0143 & 0.0092 & 0.0231 & 0.0121
& 0.0282 & 0.0196 & 0.0421 & 0.0241 \\
APAO-Pairwise
& 0.0276 & 0.0181 & 0.0442 & 0.0233
& 0.0401 & 0.0269 & 0.0611 & 0.0337
& 0.0338 & 0.0233 & 0.0489 & 0.0281
& 0.0302 & 0.0195 & 0.0490 & 0.0256 \\
\baselineSep
MemGen-Fixed
& 0.0292 & 0.0191 & 0.0446 & 0.0241
& 0.0427 & 0.0285 & 0.0637 & 0.0353
& 0.0342 & 0.0238 & 0.0486 & 0.0284
& 0.0349 & 0.0237 & 0.0555 & 0.0303 \\
MemGen-Adaptive
& 0.0290 & 0.0191 & 0.0447 & 0.0241
& 0.0427 & 0.0285 & 0.0637 & 0.0353
& 0.0346 & \underline{0.0241} & 0.0490 & \underline{0.0287}
& 0.0366 & 0.0248 & 0.0564 & 0.0311 \\
\midrule
\rowcolor{blue!6}
\textbf{TIGER + TCBSA(SASRec)}
& \underline{0.0297}$^{\dagger}$ & 0.0195$^{\dagger}$ & \textbf{0.0467}$^{\dagger *}$ & \underline{0.0249}$^{\dagger}$
& \textbf{0.0447}$^{\dagger}$ & \textbf{0.0295}$^{\dagger *}$ & \textbf{0.0658}$^{\dagger}$ & \textbf{0.0363}$^{\dagger *}$
& \textbf{0.0352}$^{\dagger *}$ & \textbf{0.0244}$^{\dagger *}$ & \textbf{0.0499}$^{\dagger *}$ & \textbf{0.0291}$^{\dagger *}$
& 0.0380$^{\dagger}$ & 0.0255$^{\dagger}$ & 0.0571$^{\dagger}$ & 0.0316$^{\dagger}$ \\
\rowcolor{blue!6}
\textbf{TIGER + TCBSA(GRU4Rec)}
& 0.0295$^{\dagger}$ & 0.0193$^{\dagger}$ & \underline{0.0465}$^{\dagger *}$ & 0.0248$^{\dagger}$
& 0.0427$^{\dagger}$ & 0.0282$^{\dagger}$ & \underline{0.0653}$^{\dagger}$ & \underline{0.0355}$^{\dagger}$
& \underline{0.0347}$^{\dagger}$ & 0.0239$^{\dagger}$ & \underline{0.0491}$^{\dagger}$ & 0.0286$^{\dagger}$
& 0.0329$^{\dagger}$ & 0.0216$^{\dagger}$ & 0.0521$^{\dagger}$ & 0.0278$^{\dagger}$ \\
\rowcolor{blue!6}
\textbf{MiniOneRec + TCBSA(SASRec)}
& \textbf{0.0314}$^{\dagger *}$ & \textbf{0.0214}$^{\dagger *}$ & 0.0458$^{\dagger}$ & \textbf{0.0260}$^{\dagger *}$
& 0.0426$^{\dagger}$ & \underline{0.0291}$^{\dagger}$ & 0.0610$^{\dagger}$ & 0.0350$^{\dagger}$
& 0.0323$^{\dagger}$ & 0.0231$^{\dagger}$ & 0.0436$^{\dagger}$ & 0.0268$^{\dagger}$
& \textbf{0.0481}$^{\dagger *}$ & \textbf{0.0335}$^{\dagger *}$ & \textbf{0.0682}$^{\dagger *}$ & \textbf{0.0400}$^{\dagger *}$ \\
\bottomrule
\end{tabular*}
\end{table*}

%% file: tables/emerging_interest.tex
\begin{table}[t]
\centering

\caption{
Recall@10 and NDCG@10 on one- and two-touch emerging-interest subsets of Science and Yelp.
TCBSA uses TIGER with SASRec-based collaborative adaptation.
Best and second-best results are bold and underlined.
}

\label{tab:emerging_interest}
\small
\setlength{\tabcolsep}{2.5pt}
\renewcommand{\arraystretch}{1.06}
\begin{tabular*}{\columnwidth}{@{\extracolsep{\fill}}lrrrr@{}}
\toprule
\rowcolor{gray!8}
\textbf{Method}
& \multicolumn{2}{c}{\textbf{1-touch}}
& \multicolumn{2}{c}{\textbf{2-touch}} \\
\cmidrule(lr){2-3}
\cmidrule(lr){4-5}
\rowcolor{gray!8}
& \textbf{R@10}
& \textbf{N@10}
& \textbf{R@10}
& \textbf{N@10} \\
\midrule

\rowcolor{gray!5}
\multicolumn{5}{l}{\textit{Science}} \\
TIGER
& 0.0698 & 0.0398
& 0.1143 & 0.0722 \\
SASRec
& 0.0648 & 0.0392
& \underline{0.1429} & 0.0884 \\
Human-Prior
& 0.0664 & 0.0369
& 0.1143 & \underline{0.0904} \\
MemGen-Fixed
& \underline{0.0748} & 0.0438
& 0.1333 & 0.0870 \\
MemGen-Adaptive
& \underline{0.0748} & \underline{0.0440}
& 0.1333 & 0.0873 \\
\rowcolor{blue!6}
\textbf{TCBSA}
& \textbf{0.0781} & \textbf{0.0462}
& \textbf{0.1619} & \textbf{0.0953} \\

\addlinespace[2pt]
\rowcolor{gray!5}
\multicolumn{5}{l}{\textit{Yelp}} \\
TIGER
& 0.0600 & 0.0284
& 0.0370 & 0.0302 \\
SASRec
& \textbf{0.0886} & \textbf{0.0525}
& \textbf{0.0741} & 0.0506 \\
Human-Prior
& 0.0629 & 0.0292
& 0.0370 & 0.0234 \\
MemGen-Fixed
& \underline{0.0829} & 0.0501
& \textbf{0.0741} & 0.0535 \\
MemGen-Adaptive
& 0.0800 & 0.0496
& \textbf{0.0741} & \underline{0.0546} \\
\rowcolor{blue!6}
\textbf{TCBSA}
& \underline{0.0829} & \underline{0.0502}
& \textbf{0.0741} & \textbf{0.0614} \\

\bottomrule
\end{tabular*}
\end{table}

%% file: tables/ablation_attribution.tex
\begin{table}[t]
\centering
\caption{
TCBSA design ablations and shared-prefix attribution.
(a) NDCG@10 under confidence-gate and aggregation variants.
(b) Loss of correct Top-10 hits outside the SASRec Top-$R$ set
($R=500$) after removing target-path corrections, compared
with matched off-target controls.
Relative changes are computed from unrounded values.
}
\label{tab:ablation_attribution}

\small
\setlength{\tabcolsep}{2pt}
\renewcommand{\arraystretch}{1.2}

\begin{tabular*}{\columnwidth}{
@{\extracolsep{\fill}}lccccc@{}
}
\toprule
\multicolumn{6}{@{}l}{
\textit{(a) Confidence gate and aggregation}
} \\
\textbf{Dataset}
& \textbf{Full}
& \textbf{No gate}
& \textbf{Drop}
& \textbf{Agg. range}
& \textbf{Max dev.} \\
\midrule
Science
& 0.0249
& 0.0233
& 6.65\%
& 0.0247--0.0251
& 0.83\% \\
Yelp
& 0.0316
& 0.0310
& 1.88\%
& 0.0310--0.0318
& 1.89\% \\
\end{tabular*}

\par\vspace{0.2em}

\begin{tabular*}{\columnwidth}{
@{\extracolsep{\fill}}lcccc@{}
}
\midrule
\multicolumn{5}{@{}l}{
\textit{(b) Outside-retrieval target-path attribution}
} \\
\textbf{Dataset}
& \textbf{Hits}
& \textbf{Target drop}
& \textbf{Control drop}
& \textbf{New lost} \\
\midrule
Office
& 532
& 29.89\%
& 0.04\%
& 34/34 \\
Science
& 123
& 28.46\%
& 0.00\%
& 7/7 \\
\bottomrule
\end{tabular*}

\end{table}

%% file: section/conclusion.tex
\section{Conclusion}
TCBSA addresses irreversible prefix pruning in semantic-ID-based generative recommendation by projecting discriminative item-score margins onto shared semantic edges and applying confidence-tempered corrections to beam scores at test time, without updating either model.
We establish a sufficient condition for target-prefix survival and bound score perturbations under low discriminative confidence.
Across four full-catalog datasets, TCBSA variants achieve the best results on all 16 dataset--metric pairs, with significant gains over the strongest baselines in 14 comparisons.
These improvements extend across different generative--discriminative model pairings.
Emerging-interest and decoding analyses show accuracy gains and target-prefix recovery, while targeted interventions demonstrate that shared-prefix corrections contribute to correct recommendations beyond the discriminative Top-$R$ set.
Although TCBSA incurs additional latency at matched beam widths, it achieves higher NDCG@10 and lower latency at $B=20$ than native TIGER at $B=50$.
These findings highlight the value of incorporating collaborative scores before beam pruning, rather than only combining final rankings.

%% file: section/appendix.tex
\section{Theoretical Analysis and Complexity}
\label{app:theory_proofs}

We analyze TCBSA's cumulative test-time score adaptation, establish a
sufficient condition for target-prefix survival under global beam
competition, and characterize its inference complexity.

\subsection{Cumulative Test-Time Score Decomposition}
\label{app:score_decomposition}

For a partial semantic path $h=(h_1,\ldots,h_\ell)$, define its
accumulated collaborative potential as
$C_{\mathbf{x}_u}(h)=\sum_{j=1}^{\ell}
A_{\mathbf{x}_u}(h_{<j},h_j)$.
Since the history-specific gate $\eta(\mathbf{x}_u)$ is fixed during
decoding, recursively applying the pre-pruning update in
Equation~\eqref{eq:guided_path_score} gives
\begin{equation}
\begin{aligned}
\widetilde{S}_u(h)
&=S_0(h\mid\mathbf{x}_u)+\eta(\mathbf{x}_u)C_{\mathbf{x}_u}(h),\\
\Delta_u(h)
&=\eta(\mathbf{x}_u)C_{\mathbf{x}_u}(h),
\end{aligned}
\label{eq:cumulative_adapted_score}
\end{equation}
where $S_0(h\mid\mathbf{x}_u)=\sum_{j=1}^{\ell}
\log\pi_0(h_j\mid h_{<j},\mathbf{x}_u)$ is the native cumulative
score. Thus, test-time adaptation modifies path scores without
changing the generator's token probabilities.

\subsection{Target-Prefix Survival under Global Beam Competition}
\label{app:target_prefix_survival}

\begin{theorem}[Target-Prefix Survival]
\label{thm:target_prefix_survival}
At decoding depth $\ell$, let $\mathcal{F}_{u,\ell}$ be the global
candidate frontier, and suppose the target prefix $y_{1:\ell}$
belongs to $\mathcal{F}_{u,\ell}$.
Let $\mathcal{H}_{u,\ell}\subseteq
\mathcal{F}_{u,\ell}\setminus\{y_{1:\ell}\}$ contain competing
hypotheses, with at most $B-1$ competitors outside
$\mathcal{H}_{u,\ell}$. For each $h\in\mathcal{H}_{u,\ell}$, define
\begin{equation}
\begin{aligned}
\Delta_C(h)
&=C_{\mathbf{x}_u}(y_{1:\ell})-C_{\mathbf{x}_u}(h),\\
\Delta_S(h)
&=S_0(h\mid\mathbf{x}_u)-S_0(y_{1:\ell}\mid\mathbf{x}_u).
\end{aligned}
\label{eq:beam_competition_margins}
\end{equation}
If every $h\in\mathcal{H}_{u,\ell}$ satisfies
\begin{equation}
\eta(\mathbf{x}_u)\Delta_C(h)>[\Delta_S(h)]_+,
\label{eq:beam_survival_condition}
\end{equation}
then $y_{1:\ell}$ is retained in the global top-$B$ beam.
\end{theorem}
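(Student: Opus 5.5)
The plan is to reduce the statement to a counting argument over the adapted scores from Equation~\eqref{eq:cumulative_adapted_score}. At depth $\ell$, TCBSA keeps the $B$ elements of $\mathcal{F}_{u,\ell}$ with the largest adapted scores $\widetilde{S}_u(\cdot)=S_0(\cdot\mid\mathbf{x}_u)+\eta(\mathbf{x}_u)C_{\mathbf{x}_u}(\cdot)$. Hence $y_{1:\ell}$ is retained whenever fewer than $B$ other frontier elements have adapted score at least $\widetilde{S}_u(y_{1:\ell})$. Here "at least" makes the count pessimistic about tie-breaking.

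\textbf{Step 1: pairwise domination.} I would show that every $h\in\mathcal{H}_{u,\ell}$ satisfies $\widetilde{S}_u(y_{1:\ell})>\widetilde{S}_u(h)$. Subtracting the two decompositions gives
\begin{equation*}
\widetilde{S}_u(y_{1:\ell})-\widetilde{S}_u(h)=\eta(\mathbf{x}_u)\Delta_C(h)-\Delta_S(h).
\end{equation*}
Since $[\Delta_S(h)]_+\ge\Delta_S(h)$, condition~\eqref{eq:beam_survival_condition} makes the right-hand side strictly positive.

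The positive part serves only to make the hypothesis uniform. When $\Delta_S(h)\le 0$, it forces $\eta(\mathbf{x}_u)\Delta_C(h)>0$, so the adaptation never turns a native win into a loss. When $\Delta_S(h)>0$, it says the gated collaborative advantage exceeds the native deficit.

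\textbf{Step 2: counting competitors.} Every element of $\mathcal{F}_{u,\ell}$ that could outrank or tie the target must lie outside $\mathcal{H}_{u,\ell}\cup\{y_{1:\ell}\}$. By hypothesis there are at most $B-1$ such elements. The target's rank in the adapted ordering is therefore at most $(B-1)+1=B$, so it lies in the global top-$B$ beam regardless of how ties among other hypotheses are broken.

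\textbf{Main obstacle.} The delicate point is not algebraic but definitional. The argument has to use the fact that $\eta(\mathbf{x}_u)$ is constant over the decoding pass and that the recursion in Equation~\eqref{eq:guided_path_score} unrolls exactly into $S_0+\eta C$ for every frontier element. In particular, the frontier must be scored by adapted scores of candidate extensions before pruning, so that the comparison is between whole-prefix scores. I would state this explicitly by invoking the decomposition~\eqref{eq:cumulative_adapted_score}.

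I would also address tie-breaking. The strict inequality in~\eqref{eq:beam_survival_condition} ensures that no member of $\mathcal{H}_{u,\ell}$ ties the target. Treating all outside competitors as beating the target keeps the bound valid under any tie rule.
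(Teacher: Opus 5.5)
Your proposal is correct and follows essentially the same argument as the paper: subtracting the decompositions $\widetilde{S}_u=S_0+\eta(\mathbf{x}_u)C_{\mathbf{x}_u}$ gives $\widetilde{S}_u(y_{1:\ell})-\widetilde{S}_u(h)=\eta(\mathbf{x}_u)\Delta_C(h)-\Delta_S(h)>0$ for every $h\in\mathcal{H}_{u,\ell}$, and then only the at most $B-1$ frontier elements outside $\mathcal{H}_{u,\ell}$ can score at least as high as the target, so its rank is at most $B$. Your remarks on tie-breaking and on the constancy of $\eta(\mathbf{x}_u)$ during decoding make explicit what the paper's proof leaves implicit.
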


\begin{proof}
For any $h\in\mathcal{H}_{u,\ell}$, the adapted score difference is
\begin{equation}
\begin{aligned}
\widetilde{S}_u(y_{1:\ell})-\widetilde{S}_u(h)
&=-\Delta_S(h)+\eta(\mathbf{x}_u)\Delta_C(h)\\
&> -\Delta_S(h)+[\Delta_S(h)]_+\\
&\geq 0.
\end{aligned}
\end{equation}
Hence, every hypothesis in $\mathcal{H}_{u,\ell}$ scores strictly
below the target prefix. At most $B-1$ remaining competitors can
score at least as high as the target, so its global rank is at most
$B$, and it survives beam pruning.
\end{proof}

Applying the theorem successively at every depth, starting from the
empty prefix, also preserves the complete target path whenever the
stated condition holds at each depth.

\paragraph{Low-confidence score perturbation.}
The gate additionally limits changes to pairwise path-score margins.
For two length-$\ell$ paths $h,h'$, let
\begin{equation}
\begin{aligned}
\Omega_{\mathbf{x}_u}(h,h')
=\sum_{j=1}^{\ell}\big|&A_{\mathbf{x}_u}(h_{<j},h_j)\\
&-A_{\mathbf{x}_u}(h'_{<j},h'_j)\big|.
\end{aligned}
\end{equation}
When $m(\mathbf{x}_u)\leq\tau$, the sigmoid gate satisfies
$\eta(\mathbf{x}_u)\leq
\lambda e^{-k(\tau-m(\mathbf{x}_u))}$. Therefore,
\begin{equation}
\begin{aligned}
|\Delta_u(h)-\Delta_u(h')|
&\leq\lambda e^{-k(\tau-m(\mathbf{x}_u))}\\
&\quad\cdot\Omega_{\mathbf{x}_u}(h,h').
\end{aligned}
\label{eq:low_confidence_stability}
\end{equation}
This follows from the triangle inequality and
$\sigma(a)\leq e^a$. Thus, the confidence gate bounds changes to native pairwise score differences. If the absolute native score gap exceeds this bound, the corresponding ordering is preserved.

\subsection{Inference Complexity}
\label{app:inference_complexity}

\paragraph{Offline preprocessing.}
The item-to-path lookup stores $L$ prefix--token edges for each item, requiring $O(|\mathcal{I}|L)$ preprocessing time and storage.
It is constructed once and reused across users.

\paragraph{Test-time collaborative potential construction.}
Let $\mathcal{T}_D(|\mathcal{I}|,R)$ denote the cost of full-catalog scoring, confidence computation, and Top-$R$ selection by the frozen discriminative model.
Mapping the $R$ selected items onto their $L$ semantic edges produces at most $RL$ margin contributions.
Max aggregation requires $O(RL)$ time; top-$M$ mean aggregation requires $O(RL\log M)$ time with size-$M$ heaps, reducing to $O(RL)$ for fixed $M$.
The resulting sparse collaborative potentials require at most $O(RL)$ storage.

\paragraph{Adapted beam decoding.}
Let $\mathcal{T}_G(B,L)$ denote the computational cost of native beam search, including candidate expansion.
TCBSA adds one potential lookup and score addition per evaluated expansion, preserving the native decoder's asymptotic scaling under constant-time edge lookup.
For fixed $M$, total online time is
\begin{equation}
O\!\left(
\mathcal{T}_D(|\mathcal{I}|,R)
+\mathcal{T}_G(B,L)+RL
\right),
\label{eq:inference_complexity}
\end{equation}
with $O(RL)$ additional storage for sparse potentials, excluding buffers used for full-catalog scoring and native decoding.
This asymptotic bound does not imply negligible latency overhead in practice.

\section{Additional Experimental Details}
\label{app:implementation_details}

\subsection{Dataset Construction and Statistics}
\label{app:dataset_details}

For the three Amazon domains, we use the official 5-core chronological benchmark sequences from Amazon Reviews 2023~\cite{AmazonReviews2023}, without additional rating thresholds or deduplication. Item text for semantic-ID construction combines titles, features, categories, and descriptions.
For Yelp, we use version 2 of the Kaggle release~\cite{YelpKaggleV2}, retaining review--business interactions from the configured 2019 window. We treat all star ratings as implicit interactions, sort reviews chronologically, and iteratively enforce the 5-core constraint. Business metadata supplies the item text.

Table~\ref{tab:dataset_statistics} reports statistics after preprocessing and before leave-one-out splitting. Each user's last interaction is held out for testing and the second-to-last for validation; the history-length limit of 20 applies only to model inputs and does not affect these statistics.

\begin{table}[h]
\centering
\caption{Dataset statistics after preprocessing and before leave-one-out splitting. Avg.\ len.\ is the mean interactions per user; sparsity is the fraction of unobserved user--item pairs.}
\label{tab:dataset_statistics}
\resizebox{\columnwidth}{!}{%
\begin{tabular}{lrrrrr}
\toprule
Dataset & Users & Items & Interactions & Avg.\ len. & Sparsity \\
\midrule
Science & 50{,}985 & 25{,}848 & 412{,}947 & 8.10 & 99.97\% \\
Music   & 58{,}062 & 24{,}701 & 520{,}366 & 8.96 & 99.96\% \\
Office  & 225{,}726 & 77{,}929 & 1{,}826{,}077 & 8.09 & 99.99\% \\
Yelp    & 30{,}431 & 20{,}033 & 316{,}354 & 10.40 & 99.95\% \\
\bottomrule
\end{tabular}%
}
\end{table}

\subsection{Final Configurations and Computing Environment}
\label{app:hyperparameters}

TCBSA hyperparameters are selected on validation NDCG@10 using TIGER--SASRec. Retrieval width $R=500$ is shared across datasets, as selected by macro-averaged validation performance. The selected (aggregation, $\lambda$, $k$, $\tau$) configurations are (top-8 mean, 1.2, 13, 0.125) for Science and Music, (top-8 mean, 0.8, 13, 0.100) for Office, and (max, 1.0, 13, 0.075) for Yelp. Each dataset's settings are reused unchanged for TIGER--GRU4Rec and MiniOneRec--SASRec, without pairing-specific tuning. All experiments use NVIDIA RTX 5880 Ada Generation accelerators, and inference-efficiency comparisons use the same accelerator model.

For MemGen-Fixed, the validation-selected mixing weights are 0.4, 0.5, 0.4, and 0.7 for Science, Music, Office, and Yelp, respectively.
The corresponding MemGen-Adaptive gate parameters
$(k_{\mathrm e},\tau_{\mathrm e})$ are $(1,0.4)$,
$(1,0.0)$, $(9,0.1)$, and $(5,0.0)$.
All fusion parameters are selected on validation data.

\section{Additional Experimental Analyses}
\label{app:additional_experiments}

\subsection{Shared-Prefix Attribution}
\label{app:shared_prefix_attribution}

We extend the targeted attribution analysis in Table~\ref{tab:ablation_attribution} to all four datasets at the default retrieval width $R=500$.
For each original correct TCBSA Top-10 hit outside the SASRec Top-$R$ set, we replay frozen decoding after removing positive retrieval-derived corrections along the ground-truth target path.
As a control, we remove approximately strength-matched off-target corrections at the same semantic depths and baseline-active parents, averaging hit losses over five random trials.
The ground-truth target is used only for this offline diagnostic.

\begin{table}[h]
\centering
\caption{
Shared-prefix attribution on correct TCBSA Top-10 hits outside
the SASRec Top-$R$ set ($R=500$).
Target and control drop denote the fraction of original hits
lost after target-path removal and five-trial matched off-target
removal, respectively.
New lost counts newly gained outside-set hits over TIGER
eliminated by target-path removal.
}
\label{tab:full_prefix_attribution}

\small
\setlength{\tabcolsep}{3pt}
\renewcommand{\arraystretch}{1.15}

\begin{tabular*}{\columnwidth}{
@{\extracolsep{\fill}}lrrrr@{}
}
\toprule
\textbf{Dataset}
& \textbf{Hits}
& \textbf{Target drop}
& \textbf{Control drop}
& \textbf{New lost} \\
\midrule
Office  & 532 & 29.89\% & 0.04\% & 34/34 \\
Science & 123 & 28.46\% & 0.00\% & 7/7   \\
Music   & 34  & 32.35\% & 0.00\% & 1/1   \\
Yelp    & 15  & 80.00\% & 0.00\% & 5/5   \\
\bottomrule
\end{tabular*}
\end{table}

Table~\ref{tab:full_prefix_attribution} shows that removing target-path corrections eliminates 11/34 original outside-set hits on Music and 12/15 on Yelp, with no losses under their matched controls.
Across all four datasets, all 47 newly gained outside-set hits disappear after target-path removal.
Of the 159, 35, 11, and 12 lost original hits on Office, Science, Music, and Yelp, respectively, only 44, 9, 6, and 9 involve observed target-path beam loss; the remaining 115, 26, 5, and 3 disappear without observed beam pruning.
Thus, these targeted interventions establish dependence of some predictions on shared-prefix corrections, but do not imply that every lost hit results from beam pruning or that recall improves over the full outside-retrieval cohort.

The matched controls are available for all outside-set hits.
Their mean strength-matching errors are 4.81\%, 5.57\%,
4.30\%, and 7.18\% on Office, Science, Music, and Yelp,
respectively, with 99.42\%, 99.90\%, 99.07\%, and
91.03\% of control edges active during baseline decoding.
Positive target-path support is observed in 505/532,
118/123, 33/34, and 15/15 cases, respectively.
These diagnostics characterize the quality and coverage
of the interventions without assuming identical beam
effects between target and control removals.

\begin{figure*}[t]
\centering
\includegraphics[width=0.9\textwidth]{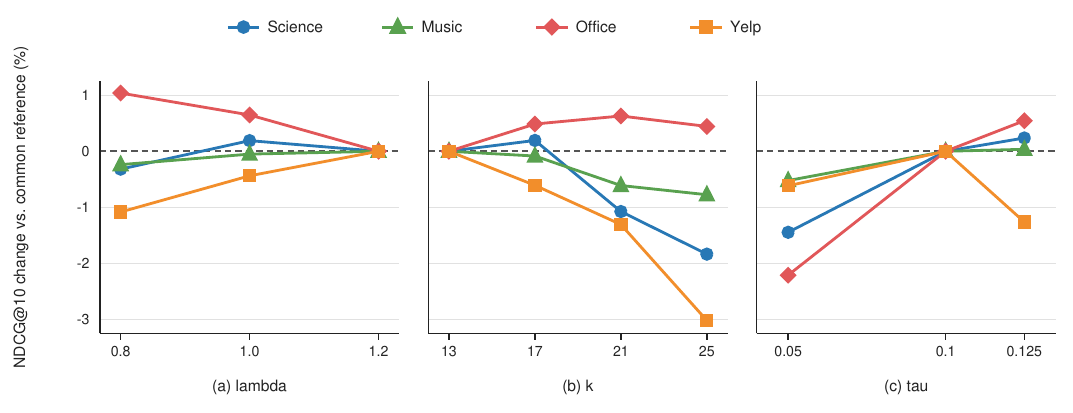}
\caption{Cross-dataset hyperparameter sensitivity on complete validation sets. Panels (a)--(c) vary the adaptation scale $\lambda$, gate slope $k$, and confidence threshold $\tau$, respectively. Curves show relative NDCG@10 changes from the common reference $(\lambda,k,\tau)=(1.2,13,0.1)$; the other two parameters remain fixed. Only values evaluated on all four datasets are shown.}
\label{fig:appendix_parameter_sensitivity}
\Description{Three panels show relative NDCG@10 changes as lambda, k, and tau vary on Science, Music, Office, and Yelp.}
\end{figure*}
\subsection{Design and Parameter Sensitivity}
\label{app:parameter_sensitivity}

\paragraph{Confidence gate and aggregation.}
We evaluate confidence tempering and semantic-edge aggregation on the complete test sets of all four datasets, modifying one component at a time while retaining each dataset's validation-selected configuration.
In the no-gate variant, the intervention strength is fixed at $\lambda$.
For aggregation, we compare top-$M$ mean ($M=8$), mean, max, and log-sum-exp, keeping the remaining settings unchanged.

\begin{table}[h]
\centering
\caption{
Test-set ablations of the confidence gate and semantic-edge aggregation.
Values are relative NDCG@10 changes (\%) from the validation-selected TCBSA configuration, computed using unrounded scores.
Bold marks the selected aggregation rule.
Top-8 denotes top-8 mean; LSE denotes log-sum-exp.
}
\label{tab:appendix_gate_aggregation}

\small
\setlength{\tabcolsep}{2pt}
\renewcommand{\arraystretch}{1.18}

\begin{tabular*}{\columnwidth}{
@{\extracolsep{\fill}}lcrrrr@{}
}
\toprule
& \multicolumn{1}{c}{\textbf{Gate}}
& \multicolumn{4}{c}{\textbf{Aggregation}} \\
\cmidrule(lr){2-2}
\cmidrule(l){3-6}
\textbf{Dataset}
& \textbf{No gate}
& \textbf{Top-8}
& \textbf{Mean}
& \textbf{Max}
& \textbf{LSE} \\
\midrule
Science
& $-6.65$
& $\mathbf{0.00}$
& $-0.49$
& $+0.56$
& $-0.83$ \\
Music
& $-6.01$
& $\mathbf{0.00}$
& $-0.22$
& $+1.19$
& $-0.28$ \\
Office
& $-9.08$
& $\mathbf{0.00}$
& $+0.14$
& $-0.22$
& $-1.45$ \\
Yelp
& $-1.88$
& $+0.48$
& $-0.03$
& $\mathbf{0.00}$
& $-1.89$ \\
\bottomrule
\end{tabular*}
\end{table}

Table~\ref{tab:appendix_gate_aggregation} shows that removing the gate reduces NDCG@10 by 1.88\%--9.08\% across all four datasets.
Across aggregation rules, the largest absolute relative deviations from the validation-selected configurations are 0.83\%, 1.19\%, 1.45\%, and 1.89\% on Science, Music, Office, and Yelp, respectively.
Some alternative aggregators achieve slightly higher test NDCG@10, but the validation-selected configurations remain unchanged.

\paragraph{Confidence-parameter sensitivity.}
We conduct one-factor sensitivity analysis on the complete validation sets using $(\lambda,k,\tau)=(1.2,13,0.1)$ as a common reference.
We vary $\lambda\in\{0.8,1.0,1.2\}$, $k\in\{13,17,21,25\}$, and $\tau\in\{0.05,0.10,0.125\}$, keeping the other two parameters fixed.
The reference differs from the dataset-specific final configurations in Appendix~\ref{app:implementation_details}.

Figure~\ref{fig:appendix_parameter_sensitivity} shows limited sensitivity over the shared grids.
The largest relative NDCG@10 decreases when varying $\lambda$, $k$, and $\tau$ are 1.08\% (Yelp), 3.02\% (Yelp), and 2.21\% (Office), respectively.
These results describe robustness around the common reference, not dataset-specific retuning.

\subsection{Retrieval-Width Selection}
\label{app:retrieval_width_sensitivity}

We vary $R\in\{20,100,500,1000,2000\}$ on the complete validation sets, keeping all non-$R$ settings at each dataset's final $R=500$ configuration.
Table~\ref{tab:appendix_retrieval_width} reports NDCG@10 by dataset and the macro average.

\begin{table}[h]
\centering
\caption{
Validation NDCG@10 across retrieval widths.
Bold values mark the globally selected $R=500$,
not the per-dataset optimum.
}
\label{tab:appendix_retrieval_width}

\small
\setlength{\tabcolsep}{3pt}
\renewcommand{\arraystretch}{1.15}

\begin{tabular*}{\columnwidth}{
@{\extracolsep{\fill}}lrrrrr@{}
}
\toprule
\textbf{Dataset}
& \textbf{20}
& \textbf{100}
& \textbf{500}
& \textbf{1000}
& \textbf{2000} \\
\midrule
Science
& 0.0301 & 0.0309 & \textbf{0.0316} & 0.0316 & 0.0315 \\
Music
& 0.0399 & 0.0406 & \textbf{0.0403} & 0.0400 & 0.0394 \\
Office
& 0.0353 & 0.0357 & \textbf{0.0358} & 0.0357 & 0.0356 \\
Yelp
& 0.0339 & 0.0364 & \textbf{0.0371} & 0.0371 & 0.0372 \\
\midrule
Macro
& 0.0348 & 0.0359 & \textbf{0.0362} & 0.0361 & 0.0359 \\
\bottomrule
\end{tabular*}
\end{table}

The macro average peaks at $R=500$, although the per-dataset optima differ.
We select the smallest width whose macro-average NDCG@10 lies within 0.5\% of the best grid value and whose largest per-dataset deficit is at most 1.5\%.
At $R=500$, the largest deficit is 0.70\% on Music, computed from unrounded validation results.
No test-set results are used for this selection.